\newtheorem{notation}{\textbf{Notation}}
\newtheorem{remark*}{Remark}
\newcommand{\N}{\mathbb{N}}
\newcommand{\Z}{\mathbb{Z}}
\renewcommand{\O}{\mathcal{O}}
\title{\mbox{Computational complexity of the avalanche problem}\\ on one dimensional Kadanoff sandpiles}
\titlerunning{Avalanche problem on 1D Kadanoff sandpiles}
\author{Enrico Formenti \inst{1} \and K\'evin Perrot\inst{1,}\inst{2,}\inst{3} \and \'Eric R\'emila\inst{4}}
\authorrunning{E. Formenti, K. Perrot and \'E. R\'emila}
\institute{
Universit\'e Nice Sophia Antipolis - Laboratoire I3S (UMR 6070 - CNRS)\\ 2000 route des Lucioles, BP 121, F-06903 Sophia Antipolis Cedex, France
\and
\small Universit\'e de  Lyon - LIP (UMR 5668 - CNRS - ENS de Lyon - Universit\'e Lyon 1)\\ 46 all\'e d'Italie 69364 Lyon Cedex 7, France
\and
Universidad de Chile - DII - DIM - CMM (UMR 2807 - CNRS)\\ 2120 Blanco Encalada, Santiago, Chile
\and
Universit\'e de  Lyon - GATE LSE (UMR 5824 - CNRS - Universit\'e Lyon 2)\\ Site st\'ephanois, 6 rue Basse des Rives, 42 023 Saint-Etienne Cedex 2, France
\email{enrico.formenti@unice.fr \hspace{1cm} eric.remila@univ-st-etienne.fr\\ kperrot@dim.uchile.cl}
}
\begin{document}

\maketitle

\begin{abstract}
 In this paper we prove that the general {\em avalanche problem} \textbf{AP} is in \NC\, for the Kadanoff sandpile model in one dimension, answering an open problem of \cite{2010-FormentiGolesMartin-KSPMAP}. Thus adding one more item to the (slowly) growing list of dimension sensitive problems since in higher dimensions the problem is \P-complete (for monotone sandpiles).\par\medskip
\noindent \textbf{Keywords.} sandpile models, discrete dynamical systems, computational complexity, dimension sensitive problems.
\end{abstract}

%%%%%%%%%%%%%%%%%%%%%%%%%%%%%%%%%%
\section{Introduction}\label{s:introduction}
This paper is about cubic sand grains moving around on nicely packed columns in one dimension (the physical sandpile is two dimensional, but the support of sand columns is one dimensional). The Kadanoff Sandpile Model is a discrete dynamical system describing the evolution of sand grains. Grains move according to the repeated application of a simple local rule until reaching a fixed point. 

We focus on the avalanche  problem (\textbf{AP}), 
namely the problem of deciding if adding a single grain 
of sand in the first column of a sandpile given as an 
input causes a series of topples which hit some position
(also given as a parameter).

This is an interesting problem from several points of view. First of all, it is dimension sensitive. Indeed,
it is proved to be \P-complete for sandpiles in dimension 2 or higher~\cite{2010-FormentiGolesMartin-KSPMAP}
and we proved it in \NC$^1$ in this paper. Roughly speaking the problem is highly parallelisable in dimension
$1$ but not in higher dimensions (unless \P=\NC, of course). Second, an efficient solution to this problem
could be useful for practical applications. Indeed, one can use sandpile models for implementing load schedulers 
in parallel computers~\cite{laredo2014}. In this context, answering to \textbf{AP} helps in forecasting the number of 
supplementary processors that are needed to satisfy one more load which is submitted to the system.

%The main interest of the model relies in the difficulty of understanding its behavior, despite the simplicity of the rule.

%\textcolor{red}{\underline{TODO:} motivate the study of AP (and PRED?) + say that this work is not trivial.}

%\textcolor{red}{\underline{TODO:} outline of the paper.}
\medskip

The paper is structured as follows. Next section introduces the basic notions and results about Kadanoff sandpiles.
Section~\ref{sec:AP} gives the formal statement of \textbf{AP} and
recalls known results about it.
In Section~\ref{sec:ava-peaks-cols}, main lemmata and notions
that are necessary for the proof of the main result are introduced and proved. Section~\ref*{sec:main-result} contains
the main result. Section~\ref{s:conclusion} draws our conclusions and give some perspectives.

%%%%%%%%%%%%%%%%%%%%%%%%%%%%%%%%%%
\section{Kadanoff sandpile model}
\begin{wrapfigure}{r}{4.4cm}
  \centering \includegraphics{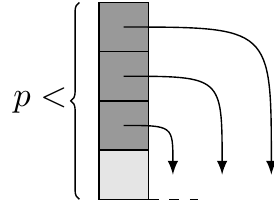}
  \caption{Transition rule with parameter $p=3$.}
  \label{fig:rule}
\end{wrapfigure}
We present the definition of the model in dimension one. A {\em configuration} is a decreasing sequence of integers $h=\,^\omega h_1,h_2,\dots,h_n^\omega$, where $h_i$ is the number of stacked grains ({\em height}) on column $i$, and such that all the heights on the left of $h_1$ equal $h_1$, and on the right of $h_n$ equal $h_n$. Note that all the configurations we consider are {\em finite}. According to a fixed parameter $p$, the transition rule is the following: if the difference of heights between two columns $i$ and $i+1$ is strictly greater than $p$, then $p$ grains can fall from column $i$ and one of them land on each of the $p$ adjacent columns on the right (see Figure \ref{fig:rule}).

A more uniform and convenient representation of a configuration uses {\em slopes}. The {\em slope} at $i$ is the height difference $s_i=h_i-h_{i+1}$. The transition rule thus becomes: if $s_i > p$, then
$$\begin{array}{lcl}
  s_{i-1} &\mapsto& s_{i-1} + p\\
  s_i &\mapsto& s_i - (p+1)\\
  s_{i+p} &\mapsto& s_{i+p} + 1.
\end{array}$$

$|h|=|s|=n-1$ is the {\em length} of the configuration, and the slope of an index $i$ such that $i < 1$ or $n-1 < i$ equals $0$. The transition rule may be applied using different update policies (sequential, parallel, {\em etc}), however we know from \cite{2002-GolesMorvanPhan-linearCFG} that for any initial configuration, the orbit graph is a lattice, hence the stable configuration reached is unique and independent of the update policy. When, from the configuration $s$ to $s'$, the rule is applied on column $i$, we say that $i$ is {\em fired} and we denote $s \overset{i}{\to} s'$ or simply $s \to s'$.

\begin{notation}
  We denote $^\omega s_i$ ({\em resp.} $s_i^\omega$) to say that all the slopes on the left ({\em resp.} right) of column $i$ are equal to $s_i$.
\end{notation}

\begin{notation}
  For any $a,b \in \Z$ with $a \leq b$, let $\llbracket a,b \rrbracket=[a,b] \cap \N$ and $\llbracket a,b)=[a,b) \cap \N$. Finally, $s_{\llbracket a,b \rrbracket}$ denotes the subsequence $(s_a,s_{a+1},\dots,s_b)$.
\end{notation}

A configuration $s$ represented as a sequence of slopes is {\em monotone} if $s_i \geq 0$ for all $i \in \llbracket 1, |s|)$. A configuration is {\em stable} if all its columns are {\em stable}, {\em i.e.}, $s_i \leq p$ for all $i \in \llbracket 1,|s|)$. A stable monotone configuration is therefore a finite configuration $s$ of the form
$$^\omega0,s_1,s_2,\dots,s_{n-1},0^\omega$$

Let $\textbf{gSM}(n)$ be the set of all stable monotone configurations of length $n$ (note that in \cite{2010-FormentiGolesMartin-KSPMAP}, the authors added the restrictive condition $s_i > 0$ for all $i$, whereas we let $s_i \geq 0$ for all $i$ and add the letter $\textbf{g}$ standing for {\em general}). Finally, Let $\textbf{gSM}=\bigcup_{n \in \N} \textbf{gSM}(n)$.

%%%%%%%%%%%%%%%%%%%%%%%%%%%%%%%%%%
\section{Avalanche problem \textbf{AP}}\label{sec:AP}
An {\em avalanche} is informally the process triggered by a single grain addition on column $1$ (a formal definition is given at the beginning of Section \ref{sec:ava-peaks-cols}). The size of an avalanche may be very small, or quite long, and is sensible to the tiniest change on the configuration. We are interested in the computational complexity of avalanches.\\

\fbox{\parbox{\textwidth-3\parindent}{
  \textbf{Avalanche Problem AP}\\
  A parameter $p \in \N$, with $p \geq 1$, is fixed.\\
  \textit{Instance:} 
  \begin{tabular}[t]{l}
    a configuration $s \in$ \textbf{gSM}\\
    a column $k \in (|s|,|s|+p\rrbracket$
  \end{tabular}\\
  \textit{Question:}
  \begin{tabular}[t]{l}
    does adding a grain on column $1$ trigger a grain addition\\
    on column $k$?
  \end{tabular}
}}\\

For a fixed parameter $p$, the size of the input is in $\Theta(|s|)$. Thanks to the convergence, the answer to this question is well defined and independent of the chosen update strategy.

Let us give some examples. For $p=2$, consider the instance
\[
^\omega0,\underline{2},0,2,1,1,2,1,0,2,0^\omega,
\] 
where the slope of column $1$ is underlined. The question is ``does adding a grain on column $1$ increases the slope of column $k$ equal to $10$ or $11$?'' And the answer is negative in both cases. Here is a sequential evolution:

\[
\begin{array}{rcl}
  ^\omega0,\underline{\textbf{3}},0,2,1,1,2,1,0,2,0^\omega&\to&^\omega0,2,\underline{0},0,3,1,1,2,1,0,2,0^\omega\\
  &\to&^\omega0,2,\underline{0},2,0,1,2,2,1,0,2,0^\omega
\end{array}
\]

For $p=3$, consider the instance $0^\omega,\underline{3},0,2,3,1,3,1,0^\omega$. We have to decide if column $k$ equal to $8$, $9$ or $10$ ends up with a strictly positive slope after a grain is added on column $1$. The answer is positive, positive and negative, respectively. Here is a sequential evolution:
\[
\begin{array}{rcl}
  ^\omega0,\underline{\textbf{4}},0,2,3,1,3,1,0^\omega&\to& ^\omega0,3,\underline{0},0,2,4,1,3,1,^\omega0\\
  \to \,^\omega0,3,\underline{0},0,5,0,1,3,2,0^\omega
  &\to& ^\omega0,3,\underline{0},3,1,0,1,4,2,0^\omega\\
  \to \,^\omega0,3,\underline{0},3,1,0,4,0,2,0,1,0^\omega &\to& ^\omega0,3,\underline{0},3,1,3,0,0,2,1,1,0^\omega
\end{array}
\]

\pagebreak

Known results on the dimension sensitive complexity of \textbf{AP} are the followings.

\begin{itemize}
  \item In dimension one: the restriction of \textbf{AP} to the set of configurations $s$ satisfying $s_i>0$ for all $i$ is known to be in \NC$^1$ \cite{2010-FormentiGolesMartin-KSPMAP}. The key simplification induced by this restriction is the following: an avalanche goes forward if and only if it encounters a slope of value $p$ at distance at most $p$ from the previous one, and thus stops when there are $p$ consecutive slopes strictly smaller than $p$. This condition is not sufficient anymore when we allow slopes of value $0$, as shown for example by the instance $^\omega0,\underline{2},0,2,2,1,2,2,0^\omega$ and $p=2$:
  \[
\begin{array}{rcl}
  ^\omega0,\underline{\textbf{3}},0,2,2,1,2,2,0^\omega&\to&^\omega0,2,\underline{0},0,3,2,1,2,2,0^\omega\\
  &\to&^\omega0,2,\underline{0},2,0,2,2,2,2,0^\omega
\end{array}
\]
  
  \item In dimension two: there are two possible definitions of the model. One has two directions of grain fall, and a configuration is a tabular of sand content that is decreasing with respect to those two directions. In this model \textbf{AP} is \P-complete for all parameter $p>1$ \cite{2010-FormentiGolesMartin-KSPMAP}. The second definition follows the original model of Bak, Tang and Wiesenfeld \cite{1987-BakTangWiesenfeld-SOC}, and it has been proved that information cannot cross (under reasonable conditions) when $p=1$, a strong obstacle for a reduction to a \P-complete circuit value problem \cite{2006-Goles-CrossingInfo2DBTWSandpile}.
  \item In dimension three or greater: sandpiles are capable of universal computation \cite{1996-GolesMargenstern-SPMUniversal}.
\end{itemize}

%%%%%%%%%%%%%%%%%%%%%%%%%%%%%%%%%%
\section{Avalanches, peaks and cols}\label{sec:ava-peaks-cols}
This subsection partly intersects with the study presented in \cite{lata}, but follows a new and hopefully clearer formulation. For a configuration $s \in \textbf{gSM}$, an avalanche is the process following a single grain addition on column $1$, until stabilization. We will consider avalanches according to the sequential update policy, and prove that it is formed by the repetition (not necessarily alternated) of the following two basic mechanisms:
\begin{itemize}
  \item fire a column greater than all the previously fired columns;
  \item fire the immediate left neighbor of the last fired column.
\end{itemize}

An {\em avalanche strategy} for $s$ is a sequence $a=(a_1,\dots,a_T)$ of columns such that $s^+ \overset{a_1}{\to} \dots \overset{a_T}{\to} s'$, where $s^+$ denotes the configuration $s \in \textbf{gSM}$ on which a grain has been added on column $1$, and $s'$ is stable. Such a strategy is not unique, therefore we distinguish a particular one which we think is the simplest.

\begin{definition}
The {\em avalanche} for $s$ is the minimal avalanche strategy for $s$ according to the lexicographic order, which means that at each step the leftmost column is fired.
\end{definition}

For example, let us consider $p=2$ and the configuration $s=\,^\omega 0,\underline{2},2,2,2,2,0^\omega$, then $(0,2,4,1,3)$ is an avalanche strategy, but {\em the} avalanche for $s$ is $(0,2,1,3,4)$ and leads to the same final configuration thanks to the lattice structure of the model \cite{2002-GolesMorvanPhan-linearCFG}.

Let us give two terms corresponding to the two basic mechanisms underlying the avalanche process, and prove the above mentioned description.
\begin{itemize}
  \item $a_t$ is a {\em peak} $\iff$ $a_t > \max a_{\llbracket 1,t \llbracket}$;
  \item $a_t$ is a {\em col} $\iff$ $a_t=a_{t-1}-1$.
\end{itemize}

First, a simple Lemma.

\begin{lemma}\label{lemma:01}
  An avalanche fires at most once every column.
\end{lemma}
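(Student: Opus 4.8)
The plan is to work throughout with the leftmost-first (lexicographically minimal) firing policy that defines \emph{the} avalanche, and to prove a single invariant that implies the statement at once: \emph{after every firing, every column that has already been fired currently has slope at most $p$}. Call this property $(\star)$. Since a column can be fired only when its slope is $>p$, and the avalanche always fires the leftmost fireable column, $(\star)$ forces every column that gets fired to be fireable for the first time; hence no column is ever fired twice. Before the induction I would record two elementary facts. First, all slopes stay nonnegative: the only operation that decreases a slope is a firing, which subtracts $p+1$ from a slope that is $\geq p+1$ and leaves it $\geq 0$, while every other modification is an increase. Second, the defining leftmost property: at the instant we fire a column $i$, every column $j<i$ satisfies $s_j\leq p$.

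The core is an induction on the firing step maintaining $(\star)$ (and, simultaneously, the no-repeat conclusion up to the current step). When we fire the leftmost fireable $i$, exactly three slopes change, and I must re-establish $(\star)$ for each. For $s_i$ itself I would use a slope bound: since by the inductive no-repeat hypothesis each of the neighbours $i+1$ and $i-p$ that feed $s_i$ has fired at most once since $s_i$ was last $\leq p$, one gets $s_i\leq 2p+1$ at firing time, hence $s_i\leq p$ after subtracting $p+1$, so $(\star)$ holds for $i$. For the increment that raises $s_{i+p}$ by $1$ the danger is mild and is handled by checking that a single $+1$ landing on an already-fired column cannot lift it past $p$.

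The delicate case, which I expect to be the main obstacle, is the increment that raises $s_{i-1}$ by $p$: if $i-1$ has already been fired, a bare application of $(\star)$ only gives $s_{i-1}\leq p$ beforehand, hence $\leq 2p$ afterwards, which would violate $(\star)$. The resolution is to show that in fact $s_{i-1}=0$ at that very moment. The argument combines the peak/col dichotomy with the firing order: if $i-1$ were already fired, it could not have been fired as a col, for a col at $i-1$ is fired immediately after $i$, which would mean $i$ had already fired and we would not be firing it now for the first time; so $i-1$ was fired as a peak. A peak is reached only by the $+1$ front from its left, so it fires at slope exactly $p+1$ and drops to slope $0$. It then remains to verify that $s_{i-1}$ has not been lifted since — the only possible lifts are a $+p$ from a firing of $i$, which has not yet occurred, or a $+1$ from a firing of $i-1-p$, which I would exclude using the leftmost order together with $(\star)$ — so that the present $+p$ brings $s_{i-1}$ to exactly $p$, preserving $(\star)$.

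Assembling the three cases completes the inductive step, and $(\star)$ yields the lemma. I want to stress why $(\star)$, rather than a purely combinatorial argument, is needed: the naive causal picture in which a firing of $i$ can trigger both its left neighbour $i-1$ and its right neighbour $i+p$ contains diamonds — for instance column $2p$ is reachable both from $p$ (via $p+p$) and from $1+2p$ (via $1+2p-1$) — so distinctness cannot be read off the trigger structure and must be forced by the slope bookkeeping. The genuinely hard point is therefore the left-neighbour analysis above, where the peak/col distinction and the leftmost ordering have to be used in tandem to pin the fired left neighbour to slope $0$.
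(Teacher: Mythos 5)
Your invariant $(\star)$ is indeed equivalent to the lemma, and an induction on firing steps along these lines can be made to work, but your proof of the inductive step has a genuine logical flaw: it is circular. In the delicate case you invoke the peak/col dichotomy (``$i-1$ could not have been fired as a col \dots\ so $i-1$ was fired as a peak'') and then the fact that a peak fires at slope exactly $p+1$ and drops to $0$. These two facts are precisely Lemma~\ref{lemma:peakcol} and Lemma~\ref{lemma:local} of the paper, and both of their proofs explicitly use Lemma~\ref{lemma:01} --- the statement you are trying to prove. As written, your argument therefore rests on results that lie downstream of this lemma, and it is not a valid proof unless those facts are re-derived inside your induction. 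A second, milder gap: in the $s_{i+p}$ case you merely assert that ``a single $+1$ landing on an already-fired column cannot lift it past $p$''; this does not follow from $(\star)$ alone, which permits a fired column to sit at slope exactly $p$, so this case needs an argument rather than a check.

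Both gaps are repaired by the same elementary bookkeeping, with no reference to peaks or cols: a column $j$ gains slope only through firings of $j+1$ (each giving $+p$) and of $j-p$ (each giving $+1$), plus the single added grain when $j=1$. Since $i$ is being fired now for the first time, an already-fired $i-1$ can never have received a $+p$; by the inductive no-repeat hypothesis it received at most one $+1$, so its slope in $s$ was exactly $p$, it fired at $p+1$, dropped to $0$, and cannot have been lifted since --- which is exactly the $s_{i-1}=0$ claim you need. Likewise an already-fired $i+p$ must have gotten a $+p$ from $i+p+1$ (it cannot have received the $+1$ from $i$), so its current slope is at most $s_{i+p}-1\leq p-1$ and the incoming $+1$ keeps it $\leq p$. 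Once phrased this way, the invariant machinery becomes unnecessary: this bookkeeping is essentially the paper's entire proof, carried out globally rather than step by step. To be fired twice, a column must receive at least $p+2$ units of slope, but its only suppliers, $i+1$ and $i-p$, can deliver at most $p+1$ units unless one of them is itself fired twice beforehand; considering the first column ever fired a second time yields an immediate contradiction. That descent argument needs no induction on steps, no invariant, and no appeal to the leftmost update policy, so it is both shorter and more general than your proposal.
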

\begin{proof}
  It is straightforward to notice that in order for a column to receive enough units of slope to be fired twice, another column must have been fired twice before, which leads to the impossibility of this situation when adding a single grain on column $1$ of a stable configuration.
  \qed
\end{proof}

Now, the intended description.

\begin{lemma}\label{lemma:peakcol}
  The avalanche of a configuration $s \in \textbf{gSM}$ is a concatenation of peaks and cols.
\end{lemma}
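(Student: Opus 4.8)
The plan is to proceed by induction on the length of the avalanche, maintaining a precise description of which columns can be fired at each step. Write $a=(a_1,\dots,a_T)$ for the avalanche, call a column \emph{active} when its slope exceeds $p$ (so that it can be fired), and set $M_t=\max a_{\llbracket 1,t\rrbracket}$ for the rightmost column fired during the first $t$ steps. Since the avalanche always fires the leftmost active column, it suffices to control the set of active columns after each firing. I would establish simultaneously two invariants: $(I_t)$ every active column after $t$ firings lies in $\{a_t-1\}\cup\{\,j:j>M_t\,\}$; and $(I'_t)$ every column of the interval $\llbracket a_t,M_t\rrbracket$ has already been fired.

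First I would settle the base case: adding a grain on column $1$ raises its slope to $p+1$ while no column to its left is active (their slopes are $0\le p$), so $a_1=1$ is fired and is a peak, and both invariants hold trivially. For the inductive step, recall that firing a column $c$ changes only three slopes, adding $p$ to $s_{c-1}$, removing $p+1$ from $s_c$, and adding $1$ to $s_{c+p}$; hence the only columns that can \emph{become} active are $c-1$ and $c+p$, while $c$ itself cannot be refired by Lemma~\ref{lemma:01}. Assuming $(I_{t-1})$ and $(I'_{t-1})$, the leftmost active column $a_t$ is either some $j>M_{t-1}$, whence $a_t$ is a peak, or it equals $a_{t-1}-1$, whence $a_t$ is a col; this is exactly the claim, provided the two invariants survive the firing of $a_t$.

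The routine part is the peak case: the leftover active columns all lie strictly to the right of $M_t=a_t$ (in particular $a_{t-1}-1$ was inactive, otherwise it would have been fired in preference to the peak), the freshly activated $c-1=a_t-1$ is the allowed col-successor, and $c+p>a_t=M_t$ falls in the permitted region, while $\llbracket a_t,M_t\rrbracket=\{a_t\}$ is just fired. The delicate point, and the main obstacle, is the col case, where depositing $+1$ at position $a_t+p$ could a priori reactivate a column at or below $M_t$ that is neither $a_t-1$ nor to the right of $M_t$, wrecking $(I_t)$. This is precisely where $(I'_t)$ is needed: since $a_t=a_{t-1}-1$ and $p\ge 1$ we have $a_t<a_t+p$, so if $a_t+p\le M_t$ then $a_t+p\in\llbracket a_t,M_t\rrbracket$, a column already fired by $(I'_{t-1})$ extended with $a_t$, hence inactive for good by Lemma~\ref{lemma:01}; otherwise $a_t+p>M_t$ lands in the allowed region. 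Rewriting $\llbracket a_t,M_t\rrbracket=\{a_t\}\cup\llbracket a_{t-1},M_{t-1}\rrbracket$ re-establishes $(I'_t)$, and collecting the cases yields $(I_t)$. The induction then forces every $a_t$ to be a peak or a col, proving the lemma.
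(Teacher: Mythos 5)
Your proof is correct and takes essentially the same route as the paper's: an induction along the avalanche that combines Lemma~\ref{lemma:01}, the locality of the firing rule (only $a_t-1$ and $a_t+p$ can become firable), and the leftmost-column (lexicographic) minimality to force each fired column to be a peak or a col. The only real difference is bookkeeping: you strengthen the induction hypothesis into the explicit invariants $(I_t)$ and $(I'_t)$, whereas the paper obtains the col case by a direct contradiction with minimality and the peak case from the fact that the interval from the last fired column up to the rightmost peak is entirely fired --- which is precisely your $(I'_t)$.
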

\begin{proof}
  Let $a=(a_1,\dots,a_T)$ be the avalanche for $s$. We prove the lemma by induction on the avalanche size. The first fired column is necessarily $a_1=1$, and we take as a convention that $\max \emptyset = 0$ thus $a_1$ is a peak. Suppose that the result is true until time $t$, we'll prove that $a_{t+1}$ is either a peak or a col. It follows from Lemma \ref{lemma:01} that $a_{t+1} \neq a_t$, and let us denote $a_{t-j}$ with $j \geq 0$ the largest (rightmost) peak before  time $t+1$. The induction hypothesis implies that columns $a_t$ to $a_{t-j}-1$ are cols.
  \begin{center}
    \includegraphics{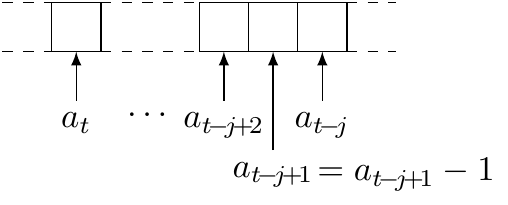}
  \end{center}
  \begin{itemize}
    \item If $a_{t+1} > a_{t}$, by induction on $i$ from $0$ to $j-1$, we have $a_{t+1}>a_t+i$ because $a_t+i$ has already been fired by hypothesis and a column cannot be fired twice (Lemma \ref{lemma:01}). As a consequence $a_{t+1} \geq a_{t-j}$ and for the same reason $a_{t+1} > a_{t-j}$, which was the greatest peak so far, therefore $a_{t+1}$ is also a peak.
    \item If $a_{t+1} < a_{t}$, then, by contradiction, if $a_{t+1} \neq a_t-1$ then the firing at $a_t$ does not influence the slope at $a_{t+1}$, and firing this latter after $a_t$ contradicts the minimality of the avalanche according the lexicographic order, because column $a_{t+1}$ was already unstable at time $t$. Therefore, $a_{t+1}$ is a col.\vspace{-11pt}
  \end{itemize}
  \qed
\end{proof}

Interestingly, avalanches are local processes because they cannot fire a column too far (neither on the left nor on the right) from the last fired column,
as it is proved in the following lemma.

\begin{lemma}\label{lemma:local}
  Let $a$ be the avalanche of a configuration $s \in \textbf{gSM}$,
  $q>0$ \text{ is a peak of } a \text{ implies that } $s_q = p$ \text{ and there exists another peak } $q'$ \text{ satisfying } $q-q' \leq p$.
\end{lemma}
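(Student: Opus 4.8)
The plan is to prove the two assertions separately, both by bookkeeping the slope increments that column $q$ can receive before it is eventually fired. Recall that firing a column $i$ only modifies three slopes: it adds $p$ to $s_{i-1}$, subtracts $p+1$ from $s_i$, and adds $1$ to $s_{i+p}$. Consequently the slope $s_q$ can be increased only by firing column $q+1$ (which adds $p$) or by firing column $q-p$ (which adds $1$); the added grain affects only $s_1$. This observation, together with Lemmata~\ref{lemma:01} and~\ref{lemma:peakcol}, is what I would use throughout.

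For the equality $s_q=p$, I would argue as follows. Let $q>1$ be a peak, fired at some time $t$. Because $q$ is a peak, every column fired strictly before time $t$ is smaller than $q$; in particular $q+1$ has not yet been fired, so $s_q$ has received no $+p$ increment. By Lemma~\ref{lemma:01} column $q-p$ is fired at most once, so $s_q$ has received at most one $+1$ increment. Hence at time $t$ one has $s_q \le s_q^{\,\mathrm{orig}}+1$, where $s_q^{\,\mathrm{orig}}$ is the value in the stable configuration $s$. Since a firing requires $s_q>p$ and stability gives $s_q^{\,\mathrm{orig}}\le p$, both inequalities must be tight: $s_q^{\,\mathrm{orig}}=p$ and, crucially, column $q-p$ must actually have been fired (strictly before $q$, since its increment is needed to destabilise $q$). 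Note this forces $q>p$ for any peak $q>1$ (otherwise $q-p\le 0$ is off the board and no increment is available), so $q-p\ge 1$ is a genuine column; the first peak $q=1$ is the base case, destabilised directly by the added grain, which again forces $s_1=p$.

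For the existence of a nearby earlier peak, I would feed the fact that $q-p$ is fired before $q$ into the structure given by Lemma~\ref{lemma:peakcol}. The avalanche is a concatenation of peaks and cols, so it splits into maximal runs, each consisting of a peak followed by a contiguous descending block of cols; thus the set of fired columns is a union of integer intervals whose right (largest) endpoints are exactly the peaks. Let $q'$ be the peak of the run containing $q-p$. Then $q-p\le q'$, so $q-q'\le p$; and since the peak of a run is fired first in that run, $q'$ is fired no later than $q-p$, hence strictly before $q$, whence $q'<q$ because $q$ is a peak. Therefore $q'$ is an earlier (distinct) peak with $q-q'\le p$, as required.

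The only genuinely delicate step is the increment count in the second paragraph: one must be sure that nothing other than a firing of $q-p$ can raise $s_q$ before $q$ itself fires. This rests entirely on the two facts that $q+1>q$ cannot have been fired yet (the definition of peak) and that each column fires at most once (Lemma~\ref{lemma:01}); once these are in place, the remainder is routine bookkeeping, and locating the preceding peak is immediate from the run decomposition of Lemma~\ref{lemma:peakcol}. I would also treat the boundary cases $q\le p$ and $q=1$ explicitly, as indicated above.
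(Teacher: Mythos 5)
Your proof is correct and follows essentially the same route as the paper's: you bound the slope increments column $q$ can receive before it fires (no $+p$ increment since $q+1$ is unfired by the definition of peak, and at most one $+1$ increment from $q-p$ by Lemma~\ref{lemma:01}), conclude that $s_q=p$ and that $q-p$ must actually have been fired, and then locate an earlier peak within distance $p$ using the peak/col structure of Lemma~\ref{lemma:peakcol}. The only difference is that you spell out what the paper dismisses as ``straightforward'' --- the run decomposition producing $q'$, and the boundary cases $q\le p$ and the initial peak --- which is a welcome but inessential elaboration.
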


\begin{proof}
  Let $t$ be such that $q=a_t$. By definition of peak, at time $t$ column $q$ could only have received units of slope from columns on its left, that is, by Lemma \ref{lemma:01} it received at most $1$ unit of slope from column $q-p$. Since it was stable on configuration $s$, it has necessarily received this unique unit from column $q-p$ and became unstable thanks to it, which straightforwardly proves both claims.
  \qed
\end{proof}

Note that the converse implication is false. Figure \ref{fig:avalanche} illustrates the results of this section.

\begin{figure}[!h]
  \includegraphics[width=\textwidth]{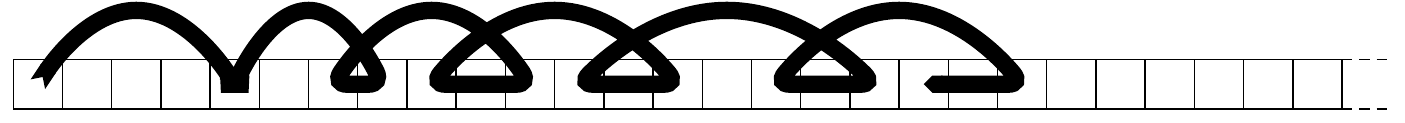}
  \caption{For $p=4$, the arrow pictures the proceedings of an avalanche, which is a concatenation of peaks and cols (Lemma \ref{lemma:peakcol}) where two consecutive peaks are at distance at most $p$ (Lemma \ref{lemma:local}).}
  \label{fig:avalanche}
\end{figure}

%%%%%%%%%%%%%%%%%%%%%%%%%%%%%%%%%%
\section{\textbf{AP} is in \textbf{NC}$^1$ in dimension one}\label{sec:main-result}
We consider that the input configuration is represented as a sequence of slopes, since it is possible to efficiently transform a representation into another in parallel (for a configuration of size $n$, it requires constant time on $n$ parallel processors). We consider the parameter $p$ as a fixed constant, as it is part of the model definition

\begin{remark*}
In this paper, we consider the parameter p as a fixed constant which is part of the model definition.
Indeed, if p would have been part of the input, which would therefore have size $(|s|+2)\log p$, then comparing the
height of a column to p (in order to know if the rule can be applied at this column) would not take a constant time
anymore. This implies many low level considerations we want to avoid and inflate complexity.
\end{remark*}
%\footnote{Note that if $p$ is part of the input, which would therefore have size $(|s|+2)\,\log( p )$, then comparing the height of a column to $p$ (in order to know if the rule can be applied) would not take a constant time anymore. This implies many low level considerations we want to avoid.}.

%\todo{Note that if the Kadanoff parameter $p$ is part of the input, then the problem is \textbf{P}-complete, but it is definitely not natural to consider $p$ as part of the input.}

We recall that \NC $= \bigcup_{k\in\N}\mathsf{PT/WK}(\log^k n,n^k)$, where $\mathsf{PT/WK}(f(n),g(n))$ (Parallel Time / WorK) is the class of decision problems solvable by a uniform family of Boolean circuits with depth upper-bounded by $f(n)$ and size (number of gates) upper-bounded by $g(n)$, which is more conveniently seen for our purpose as solvable in time $\O(f(n))$ on $\O(g(n))$ parallel PRAM processors. We recall that \NC$^1$=$\mathsf{PT/WK}(\log n, \mathbb R[n])$ where $\mathbb R[n]$ denotes the set of polynomial functions.

As a consequence of Lemmata~\ref{lemma:peakcol} 
and~\ref{lemma:local}, the avalanche process is local. 
Moreover, if we cut the configuration into two parts, we can compute both parts of the avalanche independently, provided a small amount of information linking the two parts. This independency will be at the heart of our construction in order to compute the avalanche efficiently in parallel. Let us have a closer look at how to encode this ``midway information'', which we call \emph{status} (a notion named \emph{trace} has been defined in~\cite{mfcs}, which shares some of those ideas).

For a column $i>p$ of a configuration $s$, the {\em status} at $i$ of the avalanche $a$ for $s$ is the boolean $p$-tuple $(b_0,\dots,b_{p-1})$ such that $b_j=1$ if column $i-p+j$ is fired within $a$, and 0 otherwise. For example, consider the avalanche of Figure \ref{fig:avalanche}, its status at column $8$ (the column where the avalanche starts has index $1$) is $(0,1,0,1)$.

We claim that given a column $i$, the incomplete configuration $s \cap \llbracket i,|s|)$ and the status at $i$ of the avalanche $a$ for $s$, we can compute the avalanche on the part of $s$ that we have, that is, $a \cap \llbracket i, |s|)$.

Note that in the proof of Theorem \ref{theorem:main} we use only simple instances of Lemma \ref{lemma:status}, but we still present it in a general form.

\begin{lemma}\label{lemma:status}
%  \ \\
%  \parbox{8cm}{
    Given
    \begin{itemize}
      \item a part $s \cap \llbracket i,j)$ with $i+p<j$,
      \item the status at $i$ of the avalanche $a$ for $s$,
    \end{itemize}
    one can compute
    \begin{itemize}
      \item the avalanche on $a \cap \llbracket i, j-p \rrbracket$,
      \item the status of $a$ at $j-p+1$,
    \end{itemize}
    in time $\O(j-i)$ on one processor.\\
%  }
%  \hspace{.5cm}
%  \parbox{7cm}{
    \begin{center}
      \centering \includegraphics{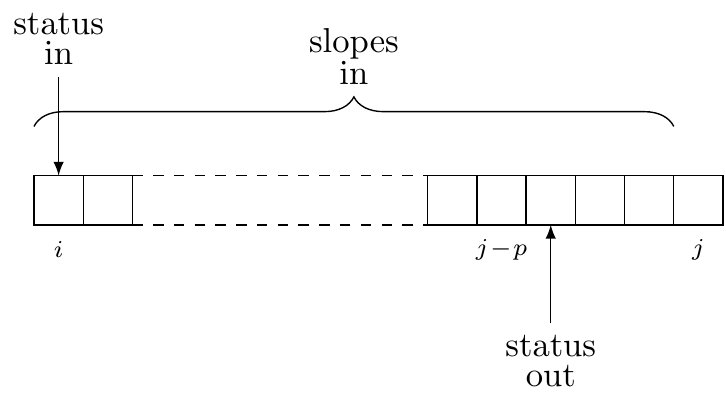}
    \end{center}
%  }
\end{lemma}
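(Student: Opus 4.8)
The plan is to prove the lemma operationally: starting from the \emph{status} at $i$, I would replay the avalanche inside the window $s \cap \llbracket i,j)$ column by column, exploiting the peak/col description of Lemmata~\ref{lemma:peakcol} and~\ref{lemma:local}. The status at $i$ tells exactly which columns of $\llbracket i-p,i-1\rrbracket$ are fired, and since firing any column $c$ raises $s_{c+p}$ by one, these bits determine precisely the extra units of slope that enter the window at columns $\llbracket i,i+p-1\rrbracket$. So the first step is to inject, for each fired $c\in\llbracket i-p,i-1\rrbracket$, one unit of slope at $c+p$, turning the incomplete window into a self-contained source of instability; every column $q\ge i$ that becomes a peak does so because its trigger $q-p$ fired, and for $q<i+p$ that information is exactly what the status provides.

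The heart of the argument --- and the step I expect to be the main obstacle --- is to show that the window together with the status really suffices, i.e.\ that no firing in $\llbracket i,j-p\rrbracket$ depends on columns outside $\llbracket i-p,j-1\rrbracket$. The key structural fact I would isolate is that a maximal run of cols has length at most $p-1$: by Lemma~\ref{lemma:local} a peak at $q$ forces its trigger column $q-p$ to have been fired, and by Lemma~\ref{lemma:01} no column fires twice, so the descending chain of cols launched by the peak at $q$ must halt before reaching the already-fired column $q-p$, hence no lower than $q-p+1$. Consequently every col at a column $q$ descends from a peak at distance at most $p-1$ to its right.

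With this bound the locality becomes transparent. Any firing at $q\le j-p$ is either a peak, whose occurrence depends only on $s_q$ and on whether $q-p\le j-2p$ fired (a column to the left, whose status reduces leftward either to the status region or to the source at column $1$), or a col descending from a peak at some $q'\le q+p-1\le j-1$, which lies inside the window and whose own trigger $q'-p<j$ is again inside. Conversely, an external peak at $q'\ge j$ can launch cols no lower than $q'-p+1\ge j-p+1$, so it never touches $\llbracket i,j-p\rrbracket$; this is exactly why the output is cut off at $j-p$ although the window runs up to $j$, and why truncating the configuration at $j$ is harmless for the output range. Because the model is confluent and each column fires at most once (Lemma~\ref{lemma:01}), the set of fired columns is independent of the update order, so I may carry out the replay in the convenient left-to-right, peak-then-cols order and still recover the true firings on $\llbracket i,j-p\rrbracket$. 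Finally, the status at $j-p+1$ records the firings on $\llbracket j-2p+1,j-p\rrbracket$, and since $i+p<j$ this range is covered by the status at $i$ (on its part below $i$) together with the firings just computed on $\llbracket i,j-p\rrbracket$, so it can be read off directly.

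For the complexity, Lemma~\ref{lemma:01} caps the number of firings inside the window by $j-i$; each firing updates a constant number of slopes and, since two consecutive peaks are at distance at most $p$ (Lemma~\ref{lemma:local}), locating the next peak costs $\O(p)=\O(1)$ as $p$ is a fixed constant. Hence the whole replay runs in $\O(j-i)$ time on a single processor. The delicate points to get right in the write-up are the col-length bound and the verification that the $p$-gap between $j-p$ and $j$ insulates $\llbracket i,j-p\rrbracket$ from everything beyond column $j$; the rest is bookkeeping on the injected slopes and on the two boundary windows of width $p$.
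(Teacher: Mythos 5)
Your proposal is correct, and at the algorithmic level it coincides with the paper's proof: a single left-to-right pass over the window that maintains which of the last $p$ columns have fired, and charges $\O(1)$ work per peak thanks to Lemma~\ref{lemma:local}, giving $\O(j-i)$ total. The correctness arguments, however, take noticeably different routes. The paper's crux is a characterization of the next peak proved from lexicographic minimality: given the status $(b_0,\dots,b_{p-1})$ at $k$, the next peak is the \emph{smallest} $\ell$ with $0\le \ell-k<p$, $s_\ell=p$ and $b_{\ell-k}=1$, and it proves both directions (such an $\ell$ cannot be a col, and the true next peak satisfies these conditions). You instead rest on three structural facts: injecting one unit of slope at $c+p$ for each fired $c$ recorded in the status makes the window self-contained; a col run has length at most $p-1$ (because the trigger $q-p$ of a peak $q$ has already fired and, by Lemma~\ref{lemma:01}, cannot fire twice), so peaks at or beyond $j$ never reach $\llbracket i,j-p\rrbracket$; and confluence plus Lemma~\ref{lemma:01} make the fired set order-independent, so the peak-then-cols replay is faithful. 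Your route makes explicit two points the paper leaves implicit --- why the output can safely be truncated at $j-p$ even though the window extends to $j$, and why the status at $j-p+1$ is fully determined --- while the paper makes explicit the point you leave implicit: how the replay identifies, in constant time and without ambiguity, \emph{which} candidate column actually fires as the next peak (your scan must pick the leftmost column with $s_\ell=p$ whose trigger has fired, and proving that this is the next peak, rather than a col or an unfired column, is exactly the paper's two-bullet argument). Neither omission is fatal; the two write-ups are complementary presentations of the same linear-time scan.
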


\begin{proof}
  We claim that given the status of the avalanche $a$ at a column $k$, we can find the smallest (leftmost) peak after column $k$, let us denote it by $q=\min\{q ~|~ q \geq k \text{ and } q \text{ is a peak}\}$, and the part of $a$ between $k$ and $q$, {\em i.e.}, $a \cap \llbracket k,q \rrbracket$. This will be done in constant time thanks to Lemma \ref{lemma:local}: $q-k<p$ so we have to check a constant number of columns. The result then follows an induction on the peaks within $\llbracket i,j )$: from the status at $k$ (initialized for $k=i$), we find the next peak $q$ and compute $a \cap \llbracket k,q \rrbracket$, append it to the previously computed $a \cap \llbracket i,k\rrbracket$, which also allows to construct the status at $q+1$ in constant time. And this process is repeated at most a linear number of times:
  \begin{itemize}
    \item either the avalanche stops at some time,
    \item or the greatest peak encountered is between $j-p$ and $j-1$,
  \end{itemize}
  and in both cases we can compute the intended objects by appending the previously computed parts of the avalanche (Lemma \ref{lemma:peakcol}, recall that the status at $j-p+1$ tells wether columns between $j-p+1-p$ and $j-p$ are fired or not).
  
  Knowing the status of $a$ at $k$, let us explain how to compute the smallest peak after column $k$, denoted $q$, and $a \cap \llbracket k,q \rrbracket$. Let $(b_0,\dots,b_{p-1})$ be the status of $a$ at $k$. From Lemma \ref{lemma:local} the peak $q$ has a value of slope equal to $p$ in $s$ and is at distance smaller than $p$ from $k$. We will now prove that it is very easy to find $q$ in constant time: $q$ is the smallest column $\ell$ such that $0 \leq \ell-k<p$, and $s_\ell=p$ and $b_{\ell-k}=1$.
  \begin{itemize}
    \item Such an $\ell$ is a peak: since $b_{\ell-k}=1$, column $\ell-p$ is fired. When it is fired, it gives one unit of slope to column $\ell$ which can be fired since its slope is initially equal to $p$ and becomes $p+1$. It cannot be a col, which would mean that there is another peak $q''$, greater, which is fired before $\ell$, but from Lemmas \ref{lemma:peakcol} and \ref{lemma:local} this contradicts the minimality of the avalanche because when $q''$ is fired the column $\ell$ is also firable ($\ell-p$ has already been fired since it is at distance strictly greater than $p$ of $q''$).
    \item The smallest peak greater or equal to $k$ satisfies those three conditions: the two first conditions are straightforward from Lemma \ref{lemma:local}. The last condition can be proved by contradiction: suppose there is a peak $q''$ such that $b_{q''-k}=0$, {\em i.e.}, column $q''-p$ is not fired in the avalanche, then $q''$ still needs to receive some units of slope to become unstable, which can only come from its left neighbor $q''+1$ thus this latter has to be fired before it, a contradiction.
  \end{itemize}
  As a consequence of the two above facts, the smallest such $\ell$ is indeed the intended peak $q$, and can be computed in constant time.
  There are $\O(j-i)$ peaks within $\llbracket i,j)$, and each step of the induction needs a constant computation time on one processor, thus the last part of the lemma holds.
  \qed
\end{proof}

Thanks to Lemma \ref{lemma:status} we can perform the computation efficiently in parallel as follows.

\begin{theorem}\label{theorem:main}
  For a fixed parameter $p$ and in dimension one, \textbf{AP} is in \NC$^1$.
\end{theorem}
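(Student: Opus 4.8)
The plan is to reduce the computation of the avalanche to an iterated product in a \emph{fixed} finite monoid, a classical \NC$^1$ task. The key observation is that since $p$ is a fixed constant, a status is an element of the set $S=\{0,1\}^p$, of constant size $2^p$. By Lemma~\ref{lemma:status}, applied with the smallest admissible window (e.g. $j=i+p+1$), the status at a column $i$ together with the constant-size window of slopes $s_i,\dots,s_{i+p}$ determines both the piece of the avalanche around $i$ and the status a fixed constant number of columns further to the right. Thus, for each admissible column $i$ one can define a local transition map $T_i\colon S\to S$, computable in constant time from its constant window, that advances the status past column $i$.

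First I would fix the two boundaries, which are constant-size computations. The leftmost fired column is always column $1$ (it receives the added grain and becomes unstable), so the initial status near the start of the configuration is computable from a constant number of columns in constant time. For the right boundary, recall that by Lemmata~\ref{lemma:peakcol} and~\ref{lemma:local} every peak $q$ satisfies $s_q=p$ in $s$, and cols only proceed leftward; hence no column beyond $|s|$ is ever fired and the rightmost fired column is a peak $\leq|s|$. Consequently, a grain is added to column $k\in(|s|,|s|+p\rrbracket$ if and only if some column of $\llbracket k-p,|s|\rrbracket$ is fired, which is exactly a disjunction of a constant number of bits of the status at column $|s|+1$. So it suffices to compute this single status.

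The heart of the argument is then to obtain the status at $|s|+1$ as a composition $T_{\text{end}}\circ\dots\circ T_{\text{start}}$ applied to the initial status. Each $T_i$ is an element of the monoid of self-maps of the constant-size set $S$, which is a fixed finite monoid independent of $n$; composing two such maps is a constant-size Boolean operation. Computing all the $T_i$ in parallel takes constant time on $\O(n)$ processors, each reading its own constant window, and their composition is evaluated by a balanced binary tree of depth $\lceil \log_2 |s|\rceil$ in which every node performs one constant-time monoid product. Applying the resulting map to the initial status and taking the final disjunction are again constant-depth steps (with at most a fixed short adjustment to land exactly at column $|s|+1$). The total circuit therefore has depth $\O(\log n)$ and polynomial size, placing \textbf{AP} in \NC$^1$.

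The main obstacle is conceptual rather than computational: one must recognise that although the status at a given column depends on the \emph{entire} left part of the configuration, the \emph{increment} of the status across a bounded stretch of columns is purely local, which is precisely what Lemma~\ref{lemma:status} provides. Once the global quantity is expressed as an iterated product of local, constant-size factors, the fact that $p$ --- and hence the status space $S$ and its transformation monoid --- is of constant size is exactly what reduces the problem to the standard \NC$^1$ evaluation of a product in a fixed finite monoid; the remaining initialisation at column $1$ and read-off at column $|s|+1$ are routine boundary handling.
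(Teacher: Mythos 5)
Your overall architecture coincides with the paper's own proof: you use Lemma~\ref{lemma:status} to build constant-size local transition maps on statuses, compose them along a balanced binary tree of depth $\O(\log|s|)$, and initialize at the left boundary in constant time. Your ``iterated product in a fixed finite monoid'' is exactly the paper's pairwise composition of the functions ``status at $x$ $\to$ status at $y$'' (Figure~\ref{fig:composition}); these functions \emph{are} elements of the transformation monoid of $S=\{0,1\}^p$, so up to vocabulary the two constructions are identical.

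The genuine problem is your final read-off, which answers a different question than the one the paper poses. Writing $k=|s|+\kappa$, the paper answers \emph{yes} iff the single bit $b_{\kappa-1}$ of the status at $|s|+1$ equals $1$, i.e.\ iff column $k-p$ is fired; you answer \emph{yes} iff \emph{some} column of $\llbracket k-p,|s|\rrbracket$ is fired, i.e.\ you take the disjunction $b_{\kappa-1}\vee\dots\vee b_{p-1}$. These are not equivalent. The paper works in the slope representation throughout, and its worked examples fix the semantics of \textbf{AP} as ``does $s_k$ end up strictly positive''; since firing a column $q$ changes only $s_{q-1}$ and $s_{q+p}$, a fired column $q$ with $k-p<q\leq|s|$ affects $s_{q+p}$ with $q+p>k$ and leaves $s_k$ untouched, so your disjunction over-reports. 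Concretely, take $p=2$, $s=\,^\omega 0,2,0,2,0,2,0^\omega$ (so $|s|=5$) and $k=6$: the avalanche fires columns $1,3,5$ (three peaks, no cols), and the slope of column $6$ remains $0$, so the correct answer is \emph{no}; yet column $5\in\llbracket 4,5\rrbracket$ is fired, so your algorithm answers \emph{yes}. Your criterion is the right one for the literal height-picture reading ``column $k$ physically receives a grain'' (firing column $5$ does drop a grain on column $6$), but that reading contradicts the semantics the paper uses in its examples and in its own proof, and the two readings disagree on this instance. The repair is immediate --- replace the disjunction by the single bit $b_{\kappa-1}$ --- and the rest of your argument then goes through. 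A further small slip: column $1$ is not always fired (if $s_1<p$ the added grain leaves it stable and the avalanche is empty); this does not damage your initialization step, which remains a constant-size computation, but the parenthetical claim should be weakened accordingly.
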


\begin{proof}
  An input of \textbf{AP} is a configuration $s \in \textbf{gSM}$ and a column $k \in (|s|,|s|+p \rrbracket$. Let $k=|s|+\kappa$ with $\kappa \in (0,p\rrbracket$.
  
% \todo{``the function status at $x$ $\to$ status at $y$'' means the function that associates to each status at $x$, the corresponding status at $y$. Is it clear?}
% \todo[color=yellow]{Pas assez à mon avis}
  
  The proof works in two stages: first, we compute for every position $i$ the function that associates to each status at $i$, the corresponding status at $i+1$, which we call ``the function status at $i$ $\to$ status at $i+1$'' (since status are elements of $\{0,1\}^p$, the size of these functions is a constant). This can be done in constant time on $|s|$ parallel PRAM using Lemma \ref{lemma:status}; and in a second stage we compute in parallel the function status at $p+1$ $\to$ status at $|s|+1$ using $log |s|$ steps, by pairwise composing the functions as illustrated in Figure \ref{fig:composition}.
  
  \begin{figure}[!h]
    \centering
    \includegraphics[width=\textwidth]{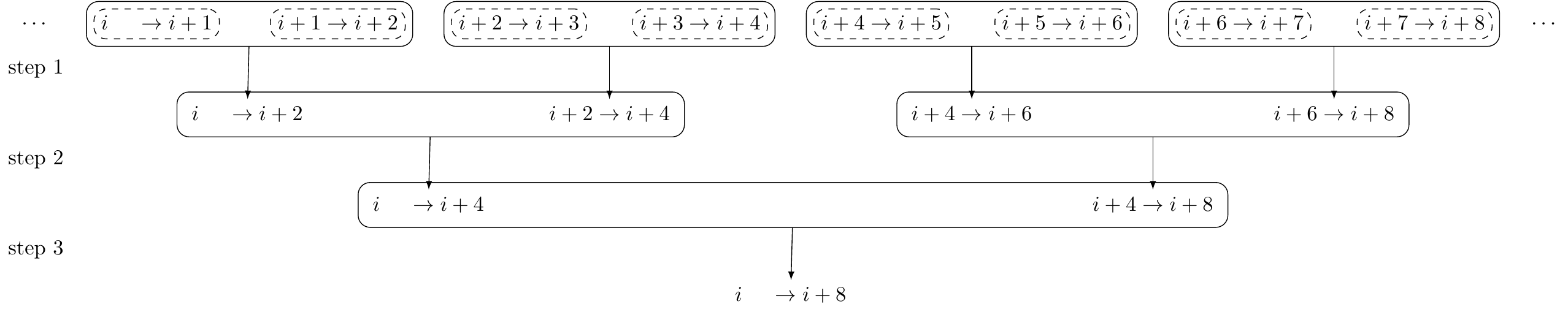}
    \caption{Illustration of the parallel computation, each symbol $x \to y$ represents the function status at $x$ $\to$ status at $y$. Dashed on the top are the functions computed during the first stage. Then, in $log |s|$ steps (each of them uses a polynomial number of parallel processors and a constant amount of time) we compose the circled functions in order to compute the function pointed out with an arrow. This composition is straightforwardly performed in constant time with the two processors: one of them transmits its function to the other one (a function of constant size is transmitted in constant time), and the latter composes two functions of constant size\dots\, in constant time. We perform those computations such that the resulting function $\mu$ has type: status at $p+1$ $\to$ status at $|s|+1$.}
    \label{fig:composition}
  \end{figure}
  
  One of the processors can then finish the job in constant time by first computing the status $\dot{b}$ at $p+1$, which can easily be done in constant time using only $(s_1,\dots,s_{p+1})$ since either $s_1$ is a peak or the avalanche stops, then either $s_{p+1}$ is a peak or the avalanche stops, and finally the cols within $s_1$ and $s_{p+1}$ are straightforwardly found thanks to Lemma \ref{lemma:peakcol}. Then, it computes the status $\ddot{b}=\mu(\dot{b})$ at $|s|+1$, and answers {\em yes} if and only if $\ddot{b}_{\kappa-1}=1$, because columns on the right of $|s|$ cannot be fired but can only receive grains from their left neighbor at distance $p$, so does column $k=|s|+\kappa$.
  
  The complete procedure uses a logarithmic amount of time on a polynomial number of parallel processors (the input has size $\Theta(|s|)$), {\em i.e.}, the decision problem \textbf{AP} is in the complexity class \NC$^1$.
  \qed
\end{proof}

\section{Conclusion and open problem}\label{s:conclusion}
%\textcolor{red}{\underline{TODO:} write something\dots about the two dimensional abelian case?}

In this paper we proved that \textbf{AP} is in \NC$^1$ in dimension
$1$ solving an open question 
of~\cite{2010-FormentiGolesMartin-KSPMAP}. Going in the direction of~\cite{formenti2004}, one might ask what is
the complexity of \textbf{AP} when the constraint on monotonicity
is relaxed. Clearly, by the results of~\cite{formenti2004},
the problem is in \P, but is it complete?

Another possible generalisation concerns symmetric sandpiles
(see \cite{2006-Formenti-SSPM,2012-EnricoHaDuongTrung-PSSPM,2011-PerrotPhanVanPham-PSSPM}, for example). In this case, the lattice structure of the phase space is lost and therefore we cannot
exploit it in the solving algorithms. This would probably direct the investigations towards non-deterministic computation
and shift complexity results from \P-completeness
to \NP-completeness. 

It also remains to classify the computational complexity of avalanches in two dimensions when the parameter $p$ equals 1 (that is, in the classical model introduced by Bak {\em et al.} \cite{1987-BakTangWiesenfeld-SOC}). As it is exposed in \cite{2006-Goles-CrossingInfo2DBTWSandpile}, this question interestingly emphasizes the links between \NC, \P-completeness, and information crossing.

%\todo[color=yellow]{Est-ce qu'il faut dire que l'énoncé du problème devrait aussi être un peu modifié ?}
%Kévin : je ne pense pas, ça va devenir compliqué pour pas grand chose...

%%%%%%%%%%%%%%%%%%%%%%%%%%%%%%%%%%
\section{Acknowledgments}\label{s:ack}
This work was partially supported by IXXI (Complex System Institute, Lyon), ANR projects Subtile, Dynamite and QuasiCool (ANR-12-JS02-011-01), Modmad Federation of U. St-Etienne, the French National Research Agency project EMC (ANR-09-BLAN-0164), FONDECYT Grant 3140527, and N\'ucleo Milenio Informaci\'on y Coordinaci\'on en Redes (ACGO).

%
% BIBLIO
%
\bibliographystyle{plain}
\bibliography{biblio}

%\newpage

%%%%%%%%%%%%%%%%%%%%%%%%%%%%%%%%%%
%%
%%	APPENDIX
%%
%%%%%%%%%%%%%%%%%%%%%%%%%%%%%%%%%%

%\appendix
%\section{}

\end{document}